\theoremstyle{plain}
\newtheorem{assumption}{Assumption}
\newtheorem{remark}{Remark}
\newtheorem{proposition}{Proposition}
\newtheorem{lemma}{Lemma}
\newtheorem{corollary}{Corollary}
\theoremstyle{plain}
\newtheorem{theorem}{Theorem}
\theoremstyle{definition}
\newtheorem{definition}{Definition}
\theoremstyle{definition}
\theoremstyle{definition}
\newtheorem{problem}{Problem}
\title{\LARGE \bf
An Analytical Study of a Two-Sided Mobility Game 
}
\author{Ioannis Vasileios Chremos, \textit{Student Member, IEEE}, and Andreas A. Malikopoulos, \textit{Senior Member, IEEE}%
\thanks{This research was supported by the Sociotechnical Systems Center (SSC) at the University of Delaware.}%
\thanks{The authors are with the Department of Mechanical Engineering, University of Delaware, Newark, DE 19716 USA. {\tt\small{\{ichremos,andreas\}@udel.edu.}}}%
}
\begin{document}

\maketitle
\thispagestyle{empty}
\pagestyle{empty}

\begin{abstract}

In this paper, we consider a mobility system of travelers and providers, and propose a ``mobility game" to study when a traveler is matched to a provider. Each traveler seeks to travel using the services of only one provider, who manages one specific mode of transportation (e.g., car, bus, train, bike). The services of each provider are capacitated and can serve up to a fixed number of travelers at any instant of time. Thus, our problem falls under the category of many-to-one assignment problems, where the goal is to find the conditions that guarantee the stability of assignments. We formulate a linear program of maximizing the social welfare of travelers and providers and show how it is equivalent to the original problem and relate its solutions to stable assignments. We also investigate our results under informational asymmetry and provide a ``mechanism" that elicits the information of travelers and providers. Finally, we investigate and validate the advantages of our method by providing a numerical simulation example.

\end{abstract}

\section{INTRODUCTION}


Commuters in big cities have continuously experienced the frustration of congestion and traffic jams. Travel delays, accidents, and altercations have consistently impacted the economy, society, and the natural environment by playing a decisive role in the increase of idling vehicle engines on city roads \cite{colini_baldeschi2017}. In addition, one of the pressing challenges of our time is the increasing demand for energy, which requires us to make fundamental transformations in how our societies use and access transportation. Thanks to evolutionary developments that are currently afoot, it is highly expected that we will be able to eliminate congestion entirely and significantly increase mobility efficiency in terms of fuel consumption and travel time \cite{zhao2019enhanced}. Self-driving cars offer a most intriguing opportunity that will enable us to travel safely and efficiently anywhere and anytime \cite{barnes2017}. Several studies have shown the benefits of \emph{emerging mobility systems} (e.g., ride-hailing, on-demand mobility services, shared vehicles, autonomous vehicles) to reduce energy and alleviate traffic congestion in a number of different transportation scenarios \cite{sarkar2016,mahbub2020decentralized,zardini2020,Malikopoulos2020}. One question though that still remains unanswered is: \emph{Can we develop an efficient mobility system that can enhance accessibility while controlling the ratio of travel demand over capacity and guarantee the welfare of all travelers?}

Recently it was shown that when daily commuters were offered a convenient and affordable taxi service for their travels, a change of behavior was noticed, namely these commuters ended up using cars more often compared to when they drove their own car \cite{harb2018}. Along with other studies \cite{bissell2018,singleton2018} this shows us that in emerging mobility systems the travelers' tendency to travel will drive travelers to use cars more often and shift them away from public transit. So, in this paper to address this problem, we study the game-theoretic interactions of travelers seeking to travel and getting assigned to providers (e.g., technology and transportation companies) where each operates one mode of transportation. Inspired by the Mobility-as-a-Service concept, we consider a system of multimodal mobility that handles user-centric information and provides travel services (e.g., navigation, location, booking, payment) to a number of travelers. The goal of such a mobility system is to guarantee mobility as a seamless service across all modes of transportation accessible to all. For our purposes, we consider a game to model the strategic and economic interactions in a mobility system with two groups of agents, a traveler and a provider group, where both have preferences and the objective is to assign each traveler to only one provider.


One of the standard approaches to alleviate congestion in a transportation system has been the management of demand size due to the shortage of space availability and scarce economic resources in the form of congestion pricing (alternatively called ``tolling mechanisms" in \cite{pigou2013}). Such an approach focuses primarily on intelligent and scalable traffic routing, in which the objective is to guide and coordinate decision-makers in path-choice decision-making. Interestingly, by adopting a game-theoretic approach, advanced systems have been proposed to assign users concrete routes or minimize travel time and study the Nash equilibria under different tolling mechanisms \cite{wada2011,salazar2019,Chremos2020MechanismDesign,chremos2020SharedMobility,Chremos2020SocialDilemma,chremos2020MobilityMarket}.

Partly related to our work are matching models which describe systems or markets in which there are agents of disjoint groups and have preferences regarding the ``goods" of the opposite agent they associate with. Notable examples are mechanisms for assigning students to schools \cite{abdulkadiroglu2005}, pickup and delivery \cite{Treleaven2011}, electric vehicles \cite{You2017}. It is easy to see that matching markets are quite practical as they offer insights into the more general economic and behavioral real-life situations. These examples are all centralized approaches of determining who gets assigned to whom at what cost and benefit. One of the very first studies was the marriage model which was analyzed by \cite{gale1962} and existence of stable matchings between men and women was established. The authors in \cite{shapley1971} extended it by incorporating monetary transactions between the agents to the marriage model and formulated it to the well-known ``assignment game." They showed that there exists a set of stable assignments, called the \emph{core} (no agent wants to deviate from their match) and it is identical to the solutions of a dual linear program. However, no explicit mechanism was offered on how to achieve a stable assignment in the core. Thanks to the natural usefulness of matching markets, various extensions of the assignment game have been developed focusing either on different behavioral settings or information structures \cite{demange1986,anshelevich2013}.


The main contribution of this paper is the development of a game-theoretical framework to study the economic interactions of travelers and providers in a two-sided many-to-one assignment game. By two-sided we mean that we consider the preferences of both the travelers and the providers. By many-to-one we mean that we impose constraints of how many travelers can be assigned to one provider and how many providers to one traveler. Our analysis can be divided into two parts. First, we use linear programming arguments to showcase the existence of an optimal assignment between travelers and providers that is also stable, i.e., no one will seek to deviate from their match. Second, we consider an asymmetric informational structure, where no traveler/provider is expected to provide their private information willingly. We provide a pricing mechanism (Algorithm \ref{alg:pricing_mechanism}) for this case and show how we can successfully elicit the private information while also ensure efficiency (maximization of social welfare).


The remainder of the paper is structured as follows. In Section \ref{Section:Formulation}, we present the mathematical formulation of the proposed mobility game which forms the basis of our theoretical study for the rest of the paper. In Section \ref{Section:Analysis&Properties}, we derive the theoretical properties of the mobility game and, in Section \ref{Section:SimulationResults}, we validate the game with a numerical simulation. Finally, in Section \ref{Section:Conclusion}, we draw concluding remarks and offer a discussion of some future research directions.

\section{MODELING FRAMEWORK}
\label{Section:Formulation}

\subsection{The ``Mobility Game" Formulation}

We consider a mobility system of two finite, disjoint, and non-empty groups of agents of which one represents the travelers and the other the providers. We denote the set of travelers by $\mathcal{I}$, $|\mathcal{I}| = I \in \mathbb{N}$ and the set of providers by $\mathcal{J}$, $|\mathcal{J}| = J \in \mathbb{N}$. In a typical mobility system, we expect to have more travelers than available providers, so $I \gg J$. Each provider represents a company (e.g., Uber, Lyft, Amtrak, DART, Lime) that manages a fleet of vehicles (cars, trains, busses, bicycles). We focus our study in static settings, in this paper, thus, each provider can serve up to a fixed number of travelers within a fixed time period. For example, in a generic city neighborhood, on any given weekday morning, there are at most a certain number of ride-sharing vehicles available (Uber/Lyft). Formally, for each provider $j \in \mathcal{J}$, we impose a physical \emph{traveler capacity}, denoted by $\varepsilon_j \in \mathbb{N}$. Naturally, each provider can serve a different number of travelers, so we expect $\varepsilon_j$ to vary significantly. For example, a train company can provide travel services per hour to hundreds of travelers compared to a bikeshare company in a city. Next, travelers seek to travel using the services of at most one provider. We do not focus our modeling in routing or path-allocation (such problems have been studied extensively \cite{krichene2015,brown2017}), rather we are interested in an optimal collective assignment of travelers to providers. Both travelers and providers have preferences and can be characterized by their type; thus, this is a two-sided mobility game.

\begin{remark}
    Without loss of generality, we expect the aggregate travel demands of all the travelers to be exactly met by the aggregate capacities of all the providers' mobility services. Thus, we have $\sum_{j \in \mathcal{J}} \varepsilon_j = |\mathcal{I}| = I$.
\end{remark}

\begin{remark}
    Intuitively, via a smartphone app, travelers book in advance for their travel needs and report their preferences and request a travel recommendation (which provider to use). The app collects all requests from specific neighborhoods at a fixed time, and then assigns each traveler to a provider by taking into account both the traveler's as well as the provider's preferences.
\end{remark}

\begin{definition}\label{defn:traveler-provider-assign}
    The \emph{traveler-provider assignment} is a vector $\mathbf{X} = (x_{1 1}, \dots, x_{i j}, \dots, x_{I J}) = (x_{i j})_{{i \in \mathcal{I}}, j \in \mathcal{J}}$, where $x_{i j}$ is a binary variable of the form:
        \begin{equation}
            x_{i j} =
                \begin{cases}
                    1, \; & \text{if $i \in \mathcal{I}$ is assigned to $j \in \mathcal{J}$}, \\
                    0, \; & \text{otherwise}.
                \end{cases}
        \end{equation}
    We call $x_{i j}$ the \emph{mobility outcome} of each traveler $i$ and each provider $j$ and denote by $\mathcal{X}$ the set of such outcomes.
\end{definition}

\begin{definition}
    For any traveler $i \in \mathcal{I}$, $\theta_i = \max_{j \in \mathcal{J}} \, \{ \theta_{i j} \} \in \Theta_i$, where $\theta_{i j} \in [0, 1]$, is traveler $i$'s \emph{personal predisposition} of provider $j \in \mathcal{J}$.
\end{definition}

We denote by $\theta_{- i}$ for the personal predisposition profile of all travelers except traveler $i$. Intuitively, a traveler might have a great affinity towards a taxicab service and a lower affinity towards a bus service. So, we expect different travelers to have different preferences on the mode of transportation to use.

Next, we represent the preferences of each traveler with a utility function consisted of two parts: the traveler's valuation of the mobility outcome and the associated payment required for the realization of the outcome. In other words, any traveler is expected to pay a toll or ticket fee for the services of a provider.

\begin{definition}
    Each traveler $i$'s preferences are summarized by a utility function $u_i : \mathcal{X} \times \Theta_i \to \mathbb{R}$ that determines the monetary value of the overall payoff realized by traveler $i$ from their assignment to provider $j$. Let $t_{i j} \in [\underline{t}, \bar{t}] \subset \mathbb{R}$ denote traveler $i$'s mobility payment. Thus, traveler $i$ receives a total utility in the form
        \begin{equation}\label{eqn:traveler_utility}
            u_i(x_{i j}, \theta_i) = v_i(x_{i j}, \theta_i) - t_{i j},
        \end{equation}
    where $v_i : \mathcal{X} \times \Theta_i \to \mathbb{R}_{\geq 0}$ is a linear valuation function that represents the maximum amount of money that traveler $i$ is willing to pay for the mobility outcome $x_{i j}$.
\end{definition}

\begin{remark}\label{rmk:no_assignment_no_payoff}
    If for any traveler $i \in \mathcal{I}$, we have $x_{i j} = 0$ for all $j \in \mathcal{J}$, then $u_i = 0$. Naturally, this means that for any traveler $i$ with $x_{i j} = 0$ for all $j \in \mathcal{J}$ we have $t_{i j} = 0$.
\end{remark}

On similar lines, we define the providers' utility function.

\begin{definition}
    A provider $j$'s utility is given by
        \begin{equation}\label{eqn:provider_utility}
    	    u_j(x_{i j}, \delta_j) = t_{i j} - c_j(x_{i j}, \delta_j),
	    \end{equation}
where $\delta_j \in (0, 1]$ represents the type of provider $j$, and $c_j$ is a linear cost function related to the operation of the mobility services provided by $j \in \mathcal{J}$. We denote by $\delta_{- j}$ for the type profile of all providers except provider $j$.
\end{definition}

\begin{remark}
    Intuitively, $\delta_j$ can be interpreted as the ``operational value" of provider $j$ for the mobility services it provides and operates. In other words, the monetary value of the entire process of its service to serve a traveler on a given location and time.
\end{remark}

In both \eqref{eqn:traveler_utility} and \eqref{eqn:provider_utility}, the ``payment component" $t_{i j}$ is not expected to dominate either the traveler's or the provider's utility function. This is because $t_{i j}$ have an alternate sign in \eqref{eqn:traveler_utility} and \eqref{eqn:provider_utility}, so a high value (or low) can lead to negative utility for the travelers (or the providers) leading to a unfavorable match between traveler $i$ and provider $j$. We will see later in the paper how we can ensure unfavorable matchings do not happen.

\begin{definition}
    Under the assignment $x_{i j}$ of traveler $i$ and provider $j$, their \emph{mobility $(i, j)$-matching payoff} is given by
        \begin{equation}
            a_{i j}(x_{i j}) = u_i(x_{i j}, \theta_i) + u_j(x_{i j}, \delta_j),
        \end{equation}
    where $a_{i j}$ measures the combined payoff or benefit measured in monetary units of traveler $i$ being assigned to provider $j$.
\end{definition}

\begin{remark}
    By Remark \ref{rmk:no_assignment_no_payoff}, if $x_{i j} = 0$, then $a_{i j} = 0$.
\end{remark}

\begin{definition}\label{defn:utility_matrix}
    The \emph{utility assignment matrix} $\mathbf{A}$ is constructed with $|\mathcal{I}|$ rows and $|\mathcal{J}|$ columns and each entry represents the $(i, j)$-matching utility $a_{i j}$ between traveler $i$ and provider $j$ for all $i \in \mathcal{I}$ and all $j \in \mathcal{J}$.
\end{definition}

Based on Definition \ref{defn:utility_matrix}, we can construct matrix $\mathbf{A}$ as follows:
    \begin{equation}
        \mathbf{A} =
            \begin{bmatrix}
                a_{11} & a_{12} & a_{13} & \dots  & a_{1J} \\
                a_{21} & a_{22} & a_{23} & \dots  & a_{2J} \\
                \vdots & \vdots & \vdots & \ddots & \vdots \\
                a_{I1} & a_{I2} & a_{I3} & \dots  & a_{IJ}
            \end{bmatrix}.
    \end{equation}

The mobility game of travelers and providers is a collection of four objects, namely set of agents, vector of assignments, matrix of utilities, and a vector of capacities. Formally, we state the next definition.

\begin{definition}
    The mobility game can be fully characterized by the tuple $\mathcal{M} = \langle \mathcal{I} \cup \mathcal{J}, \mathbf{X} = (x_{i j})_{i \in \mathcal{I}, j \in \mathcal{J}}, \mathbf{A}, (\varepsilon_j)_{j \in \mathcal{J}} \rangle$.
\end{definition}

\begin{definition}\label{defn:feasibility_stability}
    A \emph{feasible} assignment is a vector $\mathbf{X} = (x_{i j})_{{i \in \mathcal{I}}, j \in \mathcal{J}}$, $x_{i j} \in \{0, 1\}$ that satisfies constraints
        \begin{align}
            \sum_{j \in \mathcal{J}} x_{i j} & \leq 1, \quad \forall i \in \mathcal{I}, \label{Constraint:Problem1-First} \\
            \sum_{i \in \mathcal{I}} x_{i j} & \leq \varepsilon_j, \quad \forall j \in \mathcal{J}, \label{Constraint:Problem1-Second}
        \end{align}
    where \eqref{Constraint:Problem1-First} ensures that each traveler $i \in \mathcal{I}$ is assigned to only one mobility service $j \in \mathcal{J}$, and \eqref{Constraint:Problem1-Second} ensures that the traveler capacity of each provider $j$ is not exceeded while its services are shared by multiple travelers. An \emph{optimal} assignment is a feasible assignment $(x_{i j})_{{i \in \mathcal{I}}, j \in \mathcal{J}}$ such that
        \begin{equation}\label{eqn:stability_condition}
            \sum_{i \in \mathcal{I}} \sum_{j \in \mathcal{J}} a_{i j}(x_{i j}) \geq \sum_{i \in \mathcal{I}} \sum_{j \in \mathcal{J}} a_{i j}(x_{i j} '),
        \end{equation}
    for all feasible assignments $x_{i j} '$.
\end{definition}

\begin{definition}\label{defn:stability}
    A feasible assignment $\mathbf{X} = (x_{i j})_{{i \in \mathcal{I}}, j \in \mathcal{J}}$, $x_{i j} \in \{0, 1\}$ is \emph{stable} if there exist non-negative vectors $\phi = (\phi_i)_{i \in \mathcal{I}}$ and $\psi = (\psi_j)_{j \in \mathcal{J}}$ such that
	    \begin{equation}
		    \sum_{i \in \mathcal{I}} \phi_i + \sum_{j \in \mathcal{J}} \varepsilon_j \cdot \psi_j = \sum_{i \in \mathcal{I}} \sum_{j \in \mathcal{J}} a_{i j}(x_{i j})
	    \end{equation}
    with $\phi_i + \psi_j \geq a_{i j}$ for all $i \in \mathcal{I}$ and all $j \in \mathcal{J}$.
\end{definition}

We will see later in Section \ref{Section:Analysis&Properties} the mathematical and physical interpretation of $\phi$ and $\psi$.

\begin{definition}\label{defn:ideal_equilibrium}
	Let $(t_{i j} ^ *)_{i \in \mathcal{I}, j \in \mathcal{J}}$ denote the mobility payments associated with the stable assignment, denoted by $(x_{i j} ^ *)_{{i \in \mathcal{I}}, j \in \mathcal{J}}$. Then the equilibrium $(x_{i j} ^ *, t_{i j} ^ *)_{i \in \mathcal{I}, j \in \mathcal{J}}$ is called an \emph{ideal-mobility} equilibrium.
\end{definition}

From Definition \ref{defn:stability}, it is easy to see that Definition \ref{defn:ideal_equilibrium} implies that an ideal-mobility equilibrium in mobility game $\mathcal{M}$ ensures that (i) providers are assigned to travelers up to their capacity (thus, maximizing revenue), and (ii) travelers receive the best-possible utility being assigned to a provider (thus, maximizing welfare).


\begin{assumption}\label{ass:complete_information}
	Every aspect of the mobility game $\mathcal{M}$ is considered known information to every traveler and provider.
\end{assumption}

Assumption \ref{ass:complete_information} seems strong but it will prove instrumental in Section \ref{Section:Analysis&Properties}. Our analysis will focus on how to show existence, optimality, and stability of traveler-provider assignments and then in Subsection \ref{subsection:mechanism}, we will relax Assumption \ref{ass:complete_information} and show how we can elicit the private information of both travelers and providers.

\subsection{The Optimization Problem}
\label{Subsection:ProblemFormulation}

In the mobility game $\mathcal{M}$, we are interested to know what are its stable assignments (alternatively called stable equilibria), whether they exist and under what conditions.

\begin{problem}
\label{pro:primal}
    The maximization problem of $\mathcal{M}$ is
        \begin{gather}
            \max_{x_{i j}} \sum_{i \in \mathcal{I}} \sum_{j \in \mathcal{J}} a_{i j}(x_{i j}), \label{Equation:Problem1-ObjectiveFunction} \\
            \text{subject to:} \notag \;
            \eqref{Constraint:Problem1-First}, \eqref{Constraint:Problem1-Second},
        \end{gather}
    where $x_{i j} \in \{0, 1\}$ for all $i \in \mathcal{I}$ and all $j \in \mathcal{J}$.
\end{problem}

We can relax the binary variable constraint to a non-negativity constraint variable in Problem \ref{pro:primal}. We will show in the next section that this does not affect the optimal solutions of Problem \ref{pro:primal} as we can ensure all optimal solutions of the equivalent linear program are binary valued.

\begin{problem}[Linear Program]
\label{prob:linear}
    The linear program formulation of mobility game $\mathcal{M}$ is
        \begin{gather}
            \max_{x_{i j}} \sum_{i \in \mathcal{I}} \sum_{j \in \mathcal{J}} a_{i j}(x_{i j}) \label{Equation:Problem2-ObjectiveFunction} \\
            \text{subject to:} \notag \;
            \eqref{Constraint:Problem1-First}, \eqref{Constraint:Problem1-Second}, \text{ and} \\
            x_{i j} \geq 0, \quad \forall i \in \mathcal{I}, \quad \forall j \in \mathcal{J}, \label{Constraint:Problem2-Third}
        \end{gather}
    where \eqref{Constraint:Problem2-Third} transforms the (binary) assignment problem to a (continuous) linear program of which $x_{i j}$ can be interpreted as the probability that traveler $i$ is matched to provider $j$.
\end{problem}

\section{ANALYSIS AND PROPERTIES OF THE MOBILITY GAME}
\label{Section:Analysis&Properties}

\subsection{Existence, Optimality, and Stability of Assignments}

In this subsection, we show that for Problems \ref{pro:primal} and \ref{prob:linear} at least one optimal solution exists (thus, ensuring stability).

\begin{theorem}
	The stable assignments of mobility game $\mathcal{M}$ are the same with the optimal solutions of Problem \ref{pro:primal}. Furthermore, the set of optimal solutions of Problem \ref{pro:primal} is non-empty.
\end{theorem}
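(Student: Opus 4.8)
The plan is to recognize the two requirements in Definition~\ref{defn:stability} as exactly dual feasibility together with a zero duality gap for the linear program of Problem~\ref{prob:linear}, and then to connect Problem~\ref{prob:linear} back to the integer program of Problem~\ref{pro:primal} through total unimodularity. First I would write down the dual of Problem~\ref{prob:linear}: assigning multipliers $\phi_i \geq 0$ to the constraints \eqref{Constraint:Problem1-First} and $\psi_j \geq 0$ to the constraints \eqref{Constraint:Problem1-Second}, the dual reads
\begin{equation}
\min_{\phi, \psi \geq 0} \ \sum_{i \in \mathcal{I}} \phi_i + \sum_{j \in \mathcal{J}} \varepsilon_j \psi_j \quad \text{s.t.} \quad \phi_i + \psi_j \geq a_{i j}, \ \forall i \in \mathcal{I}, \ \forall j \in \mathcal{J}.
\end{equation}
Comparing with Definition~\ref{defn:stability}, a feasible $\mathbf{X}$ is stable precisely when there is a dual-feasible $(\phi,\psi)$ whose dual objective equals the primal objective evaluated at $\mathbf{X}$.

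For non-emptiness, the feasible set of Problem~\ref{pro:primal} is a finite, non-empty subset of $\{0,1\}^{I \times J}$ (the all-zero assignment is feasible), so the objective attains a maximum and an optimal solution exists. For the direction ``stable $\Rightarrow$ optimal'' I would argue by weak duality: given a stable $\mathbf{X}$ with certificate $(\phi,\psi)$, every feasible $x'$ satisfies
\begin{equation}
\begin{split}
\sum_{i \in \mathcal{I}} \sum_{j \in \mathcal{J}} a_{i j} x_{i j}'
&\leq \sum_{i \in \mathcal{I}} \sum_{j \in \mathcal{J}} (\phi_i + \psi_j) x_{i j}' \\
&\leq \sum_{i \in \mathcal{I}} \phi_i + \sum_{j \in \mathcal{J}} \varepsilon_j \psi_j \\
&= \sum_{i \in \mathcal{I}} \sum_{j \in \mathcal{J}} a_{i j} x_{i j},
\end{split}
\end{equation}
where the first inequality uses $a_{i j} \leq \phi_i + \psi_j$ with $x' \geq 0$, the second uses \eqref{Constraint:Problem1-First} and \eqref{Constraint:Problem1-Second} together with $\phi, \psi \geq 0$, and the final equality is the stability identity; hence $\mathbf{X}$ is optimal.

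For the converse ``optimal $\Rightarrow$ stable'' I would first note that the constraint matrix of Problems~\ref{pro:primal}--\ref{prob:linear} is the incidence matrix of a bipartite graph and is therefore totally unimodular, while the right-hand sides $1$ and $\varepsilon_j$ are integral; consequently every vertex of the polytope defined by \eqref{Constraint:Problem1-First}, \eqref{Constraint:Problem1-Second}, and \eqref{Constraint:Problem2-Third} is integral, so the optimum of the relaxed Problem~\ref{prob:linear} coincides with that of Problem~\ref{pro:primal}. Thus any optimal $\mathbf{X}$ of Problem~\ref{pro:primal} is also optimal for Problem~\ref{prob:linear}, and strong duality (applicable since the primal is feasible and bounded on $x_{i j} \in [0,1]$) supplies a dual-optimal $(\phi,\psi)$ whose objective matches the primal value at $\mathbf{X}$; this $(\phi,\psi)$ is exactly the certificate required by Definition~\ref{defn:stability}. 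The main obstacle is precisely this converse step, where I must rule out an integrality gap between Problems~\ref{pro:primal} and~\ref{prob:linear}: it rests on the total unimodularity of the bipartite assignment matrix and on strong duality, whereas non-emptiness and the weak-duality direction are routine by comparison.
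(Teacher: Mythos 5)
Your proof is correct, and it takes a genuinely different (and more rigorous) route than the paper's own argument. The paper's proof uses the LP relaxation and the integrality of the polytope's vertices only to establish non-emptiness; for the claimed equivalence between stable assignments and optimal solutions it offers no duality argument at all, instead asserting verbally that stability (``no agent has an incentive to break their matching pair'') amounts to maximizing aggregate utility --- which restates the claim rather than deriving it from the certificate-based Definition~\ref{defn:stability}. You, by contrast, prove both inclusions from that definition: ``stable $\Rightarrow$ optimal'' via a short weak-duality chain that requires no integrality whatsoever, and ``optimal $\Rightarrow$ stable'' via total unimodularity of the bipartite incidence matrix plus strong LP duality, which is precisely where an integrality gap would otherwise break the argument. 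Your existence argument is also cleaner: the feasible set of Problem~\ref{pro:primal} is a finite non-empty subset of $\{0,1\}^{I \times J}$, so a maximizer exists trivially, whereas the paper routes existence through the relaxation. In effect, you have front-loaded the dual program (the paper's Problem~\ref{prob:dual}) and the strong-duality/complementary-slackness machinery that the paper only introduces after this theorem; what this buys is a self-contained, checkable proof of the equivalence --- arguably the first one in the paper --- at the modest cost of constructing the dual before the paper formally states it. The only caveat worth noting is that your converse step implicitly uses the fact that the LP attains its optimum at a vertex; this holds because the constraints \eqref{Constraint:Problem1-First}, \eqref{Constraint:Problem1-Second}, and nonnegativity define a bounded polytope (each $x_{ij} \leq 1$ follows from \eqref{Constraint:Problem1-First}), so you may want to say so explicitly.
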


\begin{proof}
    By relaxing the binary constraint of Problem \ref{pro:primal}, we get a linear program (Problem \ref{prob:linear}). Its set of all real-valued solutions is a polytope whose vertices have all integer-valued coordinates. Since the solutions are also guaranteed to be non-negative, the set of solutions is non-empty \cite{schrijver1996}. Thus, Problem \ref{prob:linear} has at least one solution with integer components (in our case 0-1 components). Hence, the set of all optimal solution of Problem \ref{pro:primal} is non-empty \cite{schrijver1996}. By Definition \ref{defn:feasibility_stability}, assignments are stable as long as no agent in $\mathcal{I} \cup \mathcal{J}$ has an incentive (e.g., higher utility) to break their matching pair. So, finding a stable assignment is equivalent to finding the best in terms of aggregate utility among all possible feasible assignments. Mathematically, \eqref{eqn:stability_condition} naturally leads to a maximization problem. Therefore, the existence of a stable assignment of mobility game $\mathcal{M}$ is guaranteed.
\end{proof}

Next, we derive the dual of Problem \ref{prob:linear}.

\begin{problem}
\label{prob:dual}
    The dual of Problem \ref{prob:linear} is given below:
        \begin{gather}
            \min_{\phi, \psi} \sum_{i \in \mathcal{I}} \phi_i + \sum_{j \in \mathcal{J}} \varepsilon_j \cdot \psi_j, \label{Equation:Problem3-ObjectiveFunction} \\
            \text{subject to:} \notag \\
            \phi_i + \psi_j \geq a_{i j}, \quad \forall i \in \mathcal{I}, \quad \forall j \in \mathcal{J}, \label{Constraint:Problem3-First} \\
            \phi_i \geq 0, \quad \forall i \in \mathcal{I}, \label{Constraint:Problem3-Second} \\
            \psi_i \geq 0, \quad \forall j \in \mathcal{J}, \label{Constraint:Problem3-Third}
        \end{gather}
    where $\phi$ is a $|\mathcal{I}|$-dimensional vector and $\psi$ is a $|\mathcal{J}|$-dimensional vector.
\end{problem}

Our objective is to establish a method for the mobility game $\mathcal{M}$'s stable assignments by solving Problem \ref{prob:linear}. In turn, we want to solve Problem \ref{prob:dual} to find the stable assignments. This is possible only if we can guarantee strong duality (satisfying the conditions of complementary slackness). Formally, a feasible assignment $x_{i j}$ and a feasible solution $(\phi, \psi)$ are optimal if and only if
	\begin{equation}\label{eqn:strong_duality}
		\sum_{i \in \mathcal{I}} \sum_{j \in \mathcal{J}} a_{i j}(x_{i j}) = \sum_{i \in \mathcal{I}} \phi_i + \sum_{j \in \mathcal{J}} \varepsilon_j \cdot \psi_j.
	\end{equation}
The conditions that guarantee optimality are given by the theorem of complementary slackness, i.e.,
	\begin{align}
		\phi_i + \psi_j - a_{i j} & = 0, \quad \forall i \in \mathcal{I}, \quad \forall j \in \mathcal{J}, \label{eqn:complementary_slackness_1} \\
		\sum_{j \in \mathcal{J}} (x_{i j} -1) \cdot \phi_i & = 0, \quad \forall i \in \mathcal{I}, \label{eqn:complementary_slackness_2} \\
		\sum_{i \in \mathcal{I}} (x_{i j} - \varepsilon_j) \cdot \psi_j & = 0, \quad \forall j \in \mathcal{J}. \label{eqn:complementary_slackness_3}
	\end{align}

\begin{lemma}\label{lem:convexity_dual}
    The set of solutions of Problem \ref{prob:dual} is non-empty and convex.
\end{lemma}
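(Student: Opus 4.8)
The plan is to treat Problem~\ref{prob:dual} purely as a linear program and invoke standard duality theory, splitting the argument into a non-emptiness part and a convexity part. Throughout, ``solutions'' is read as \emph{optimal} solutions, so the set in question is the optimizer set of the dual LP.

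First I would establish non-emptiness by certifying that the primal (Problem~\ref{prob:linear}) is both feasible and bounded. Feasibility is immediate, since the all-zero vector satisfies \eqref{Constraint:Problem1-First}, \eqref{Constraint:Problem1-Second}, and \eqref{Constraint:Problem2-Third}; moreover, by the preceding theorem the primal optimal set is non-empty. Boundedness follows because every coordinate satisfies $x_{ij}\in[0,1]$ on the feasible polytope while the objective \eqref{Equation:Problem2-ObjectiveFunction} is a finite sum of the constants $a_{ij}$, so its supremum is finite. With the primal feasible and attaining a finite optimum, the strong duality theorem of linear programming guarantees that Problem~\ref{prob:dual} also attains its optimum and that \eqref{eqn:strong_duality} holds at optimality; hence the dual optimal set is non-empty. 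As an alternative direct certificate, one can exhibit dual feasibility by taking $\psi_j=0$ and $\phi_i=\max\{0,\max_{j\in\mathcal{J}}a_{ij}\}$, which satisfies \eqref{Constraint:Problem3-First}--\eqref{Constraint:Problem3-Third}, and then invoke weak duality to bound the dual objective below by the primal value, so that a feasible, bounded LP must attain its minimum.

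Next I would show convexity directly from the definition. Let $V^{*}$ denote the optimal value of Problem~\ref{prob:dual}, and let $(\phi^{1},\psi^{1})$ and $(\phi^{2},\psi^{2})$ be any two optimal solutions. The feasible region cut out by \eqref{Constraint:Problem3-First}--\eqref{Constraint:Problem3-Third} is an intersection of half-spaces, hence a convex polyhedron, so for any $\lambda\in[0,1]$ the point $(\lambda\phi^{1}+(1-\lambda)\phi^{2},\,\lambda\psi^{1}+(1-\lambda)\psi^{2})$ is again feasible. Because the objective \eqref{Equation:Problem3-ObjectiveFunction} is linear in $(\phi,\psi)$, this convex combination attains value $\lambda V^{*}+(1-\lambda)V^{*}=V^{*}$ and is therefore optimal as well. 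Thus the optimal set is closed under convex combinations, i.e., convex.

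The argument is essentially routine linear-programming theory, so I do not anticipate a genuine obstacle. The only point requiring care is verifying the hypotheses of strong duality, specifically that the primal is feasible and bounded so that the dual optimum is \emph{attained} rather than merely an infimum; once this is secured, non-emptiness and convexity follow, respectively, from the attainment guarantee and from the convexity of a polyhedron together with the linearity of the dual objective.
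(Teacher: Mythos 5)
Your proof is correct, and at the top level it follows the same route as the paper: both arguments lean on LP duality, deducing non-emptiness of the dual optimal set from solvability of the primal (Problem~\ref{prob:linear}). The differences are in rigor, and they favor your version. The paper asserts that dual solvability ``follows easily'' from primal solvability and then argues convexity by claiming that a 0--1 optimal solution must sit ``at a corner of a polyhedra'' --- a non sequitur, since the location of one optimal vertex says nothing about whether the \emph{set} of optimal solutions is convex (and the 0--1 structure belongs to the primal, not the dual, whose variables $(\phi,\psi)$ are real-valued). Your treatment fixes both weak points: you verify the hypotheses of strong duality explicitly (primal feasibility via the zero assignment, boundedness via $x_{ij}\in[0,1]$), you even supply an independent certificate of dual feasibility ($\psi_j=0$, $\phi_i=\max\{0,\max_{j\in\mathcal{J}}a_{ij}\}$) so that attainment follows from weak duality alone, and you prove convexity by the standard argument that the optimal set is the intersection of the polyhedral feasible region \eqref{Constraint:Problem3-First}--\eqref{Constraint:Problem3-Third} with the level set of a linear objective, hence closed under convex combinations. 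In short, same skeleton, but your proof is the one that actually establishes the convexity claim.
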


\begin{proof}
	We have already established that Problem \ref{prob:linear} has at least one solution. Thus, it follows easily that Problem \ref{prob:dual} has at least one solution too. Any solution of Problem \ref{prob:dual} has a specific structure due to the geometry of the constraint set \eqref{Constraint:Problem3-First} - \eqref{Constraint:Problem3-Third}. Since at least one solution will be in 0-1 components, the constraints will force this solution to be in at a corner of a polyhedra. Thus, the set of solutions of Problem \ref{prob:dual} is non-empty and has to be convex.
\end{proof}

\begin{remark}\label{rmk:interpretation_dual_variables}
    Intuitively, a dual solution $(\phi, \psi)$ can be seen as a method to share the ``gains of mobility" among travelers and providers at an ideal-mobility equilibrium (see Definition \ref{defn:ideal_equilibrium}). For example, component of vector $\phi$ describes the realized gain of traveler $i$ when assigned to provider $j$ (thus enjoying the mobility services of provider $j$). A component of vector $\psi$ describes the per unit gain of provider $j$.
\end{remark}

\begin{corollary}
    The set of solutions of Problem \ref{prob:dual} is a compact subset of $\mathbb{R} ^ {|\mathcal{I}|} \times \mathbb{R} ^ {|\mathcal{J}|}$.
\end{corollary}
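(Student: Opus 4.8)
The plan is to invoke the Heine--Borel theorem: since the dual solution set lives in the finite-dimensional space $\mathbb{R}^{|\mathcal{I}|} \times \mathbb{R}^{|\mathcal{J}|}$, it suffices to show that the set of optimal solutions of Problem \ref{prob:dual} is both closed and bounded. Lemma \ref{lem:convexity_dual} already guarantees that this set is non-empty (and convex), so I only need to add these two topological properties and then conclude compactness.

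For closedness, I would argue that the optimal set is the intersection of two closed sets. The feasible region is cut out by the finitely many linear inequalities \eqref{Constraint:Problem3-First}--\eqref{Constraint:Problem3-Third}, hence it is a closed polyhedron. The optimal solutions are exactly those feasible points on which the continuous linear objective \eqref{Equation:Problem3-ObjectiveFunction} attains its minimum value; equivalently, by strong duality \eqref{eqn:strong_duality}, they are the feasible points lying on the hyperplane $\sum_{i \in \mathcal{I}} \phi_i + \sum_{j \in \mathcal{J}} \varepsilon_j \cdot \psi_j = V^\ast$, where $V^\ast$ denotes the common optimal value. As the intersection of a closed polyhedron with a hyperplane, the optimal set is closed.

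The main obstacle is boundedness, and this is where the structure of the problem is essential. On the optimal set, strong duality \eqref{eqn:strong_duality} forces $\sum_{i \in \mathcal{I}} \phi_i + \sum_{j \in \mathcal{J}} \varepsilon_j \cdot \psi_j = V^\ast$, a fixed finite number equal to the optimal primal objective. Since every $\phi_i \geq 0$ and every $\psi_j \geq 0$ by \eqref{Constraint:Problem3-Second}--\eqref{Constraint:Problem3-Third}, each of the two sums is itself bounded above by $V^\ast$. Consequently $\phi_i \leq V^\ast$ for every $i$, and because each capacity satisfies $\varepsilon_j \in \mathbb{N}$ (so $\varepsilon_j \geq 1$) we obtain $\psi_j \leq \varepsilon_j \psi_j \leq V^\ast$ for every $j$. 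Thus every coordinate is confined to the interval $[0, V^\ast]$, and the optimal set is bounded.

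Combining the two properties, the optimal solution set is closed and bounded in $\mathbb{R}^{|\mathcal{I}|} \times \mathbb{R}^{|\mathcal{J}|}$, hence compact by Heine--Borel. I expect the only subtle point to be justifying that the objective value is constant across all dual optima---which is immediate, since they all minimize the same linear functional---so that the single bound $V^\ast$ applies uniformly to every solution.
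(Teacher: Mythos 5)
Your proof is correct, and it is genuinely more substantive than the one in the paper. The paper's own argument consists of a single sentence asserting that compactness is ``straightforward'' from Lemma \ref{lem:convexity_dual} (non-emptiness and convexity) together with Remark \ref{rmk:interpretation_dual_variables} (the economic interpretation of $(\phi,\psi)$ as shared gains); neither of those two ingredients actually yields closedness or boundedness, so the paper leaves the real mathematical content unproved. You supply exactly that content: closedness because the optimal set is the intersection of the closed polyhedron \eqref{Constraint:Problem3-First}--\eqref{Constraint:Problem3-Third} with the level hyperplane of the objective at its optimal value $V^\ast$, and boundedness because the non-negativity constraints \eqref{Constraint:Problem3-Second}--\eqref{Constraint:Problem3-Third} together with $\varepsilon_j \geq 1$ pin every coordinate into $[0, V^\ast]$; Heine--Borel then finishes it. Two minor remarks: the appeal to strong duality is not actually needed to define $V^\ast$ --- the dual optimal value exists by Lemma \ref{lem:convexity_dual} alone, and the optimal set is by definition the feasible set intersected with that level set --- and it is worth noting explicitly that $V^\ast \geq 0$ (immediate from non-negativity of the dual variables and of the $\varepsilon_j$), so the interval $[0, V^\ast]$ is well-defined. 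Neither point affects the validity of your argument, which would be a strict improvement over the proof printed in the paper.
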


\begin{proof}
	By Lemma \ref{lem:convexity_dual} and Remark \ref{rmk:interpretation_dual_variables}, it is straightforward to show that the set of solutions of Problem \ref{prob:dual} is compact.
\end{proof}

\begin{corollary}
	There always exists at least one profile of mobility payments $(t_{i j})_{i \in \mathcal{I}, j \in \mathcal{J}}$ under assignment $(x_{i j})_{i \in \mathcal{I}, j \in \mathcal{J}}$.
\end{corollary}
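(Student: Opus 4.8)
The plan is to construct the payment profile explicitly from an optimal dual solution, exploiting the fact that the mobility payment cancels out of the matching payoff. First I would fix any optimal dual pair $(\phi, \psi)$, whose existence and compactness are guaranteed by Lemma~\ref{lem:convexity_dual} and the preceding corollary. Observe that, by \eqref{eqn:traveler_utility} and \eqref{eqn:provider_utility}, the matching payoff telescopes as $a_{ij} = u_i + u_j = (v_i - t_{ij}) + (t_{ij} - c_j) = v_i - c_j$, so $a_{ij}$ carries no information about $t_{ij}$ itself and the payment acts as a pure transfer between the two sides of a match.

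The key step is to read the dual variables as the realized utilities at the ideal-mobility equilibrium, following Remark~\ref{rmk:interpretation_dual_variables}: I would set traveler $i$'s utility equal to $\phi_i$ and provider $j$'s per-unit utility equal to $\psi_j$. For each matched pair (those with $x_{ij} = 1$) this yields two candidate expressions for the payment,
\begin{equation}
    t_{ij} = v_i(x_{ij}, \theta_i) - \phi_i \quad \text{and} \quad t_{ij} = c_j(x_{ij}, \delta_j) + \psi_j,
\end{equation}
and I would verify that these coincide. Consistency is immediate from complementary slackness \eqref{eqn:complementary_slackness_1}, which forces $\phi_i + \psi_j = a_{ij} = v_i - c_j$ on every matched pair; subtracting the two expressions leaves exactly $\phi_i + \psi_j - (v_i - c_j)$, which vanishes. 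Hence $t_{ij}$ is well-defined on each matched pair, and for every unmatched pair I would set $t_{ij} = 0$ in accordance with Remark~\ref{rmk:no_assignment_no_payoff}. Collecting these values over all $i \in \mathcal{I}$ and $j \in \mathcal{J}$ produces a payment profile compatible with the given assignment, which establishes existence.

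The hard part will be the range restriction $t_{ij} \in [\underline{t}, \bar{t}]$ attached to the payment in its definition: the construction above pins $t_{ij}$ to the chosen dual solution, but nothing yet forces it into that interval. I would address this using the compactness of the dual solution set together with continuity of the map $(\phi, \psi) \mapsto (t_{ij})$, so that the induced payments range over a bounded set and $[\underline{t}, \bar{t}]$ may be taken wide enough to contain at least one such profile. Alternatively, if the bounds are to bind, the freedom afforded by the convex family of dual solutions (Lemma~\ref{lem:convexity_dual}) supplies the slack needed to relocate each transfer into the admissible interval while preserving the equilibrium utilities $\phi_i$ and $\psi_j$.
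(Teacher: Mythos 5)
Your construction is essentially a proof of Theorem \ref{thm:existence_payment}, not of this corollary. The paper disposes of the corollary in a single definitional line: the payments $t_{ij}$ are primitives of the mobility game --- every utility in \eqref{eqn:traveler_utility} and \eqref{eqn:provider_utility} is defined together with an associated $t_{ij}$, and unmatched pairs carry $t_{ij}=0$ by Remark \ref{rmk:no_assignment_no_payoff} --- so any feasible assignment automatically comes with \emph{some} payment profile, and no duality is needed. Your proposal instead builds one specific profile from an optimal dual pair via $t_{ij} = v_i - \phi_i = c_j + \psi_j$, which is precisely the identification $\phi_i = u_i$, $\psi_j = u_j$ that the paper reserves for Theorem \ref{thm:existence_payment}.

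This mismatch creates a genuine gap relative to the statement being proved. The corollary is asserted for an arbitrary assignment $(x_{ij})$ (note the unstarred notation, in contrast to the stable assignment $(x_{ij}^*)$ of Theorem \ref{thm:existence_payment}, and the paper's proof explicitly says ``any (feasible) assignment''). Complementary slackness \eqref{eqn:complementary_slackness_1} ties an optimal dual solution to an \emph{optimal} primal solution; for a feasible but suboptimal assignment nothing forces $\phi_i + \psi_j = a_{ij}$ on matched pairs, so your two candidate expressions $v_i - \phi_i$ and $c_j + \psi_j$ generically disagree and the construction collapses exactly where the corollary still makes a claim. A second, smaller defect is your handling of the range restriction: $[\underline{t}, \bar{t}]$ is given model data, not something one may ``take wide enough,'' and the convexity of the dual solution set (Lemma \ref{lem:convexity_dual}) provides no slack when that set is a singleton, so neither of your two escape routes is justified. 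The repair for the corollary as stated is the trivial one --- payments are exogenous objects attached to the game by definition --- while your dual-based construction is correct, and genuinely informative, only in the stable case, where it reproduces the paper's argument for Theorem \ref{thm:existence_payment}.
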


\begin{proof}
    By definition of the mobility game $\mathcal{M}$ for any (feasible) assignment $(x_{i j})_{i \in \mathcal{I}, j \in \mathcal{J}}$, there must be an associated profile of mobility payments $(t_{i j})_{i \in \mathcal{I}, j \in \mathcal{J}}$.
\end{proof}

Next, we show that the existence of an optimal profile of mobility payments $(t_{i j})_{i \in \mathcal{I}, j \in \mathcal{J}}$ can be guaranteed by the formulation of the dual program of Problem \ref{prob:linear} and the computation of its solutions.

\begin{theorem}\label{thm:existence_payment}
	There exists an optimal profile of mobility payments $(t_{i j} ^ *)_{i \in \mathcal{I}, j \in \mathcal{J}}$ under stable assignment $(x_{i j} ^ *)_{i \in \mathcal{I}, j \in \mathcal{J}}$. Furthermore, we must have $\phi_i = u_i$ and $\psi_j = u_j$ for all $i \in \mathcal{I}$ and all $j \in \mathcal{J}$.
\end{theorem}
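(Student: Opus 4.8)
The plan is to recover the optimal payments directly from an optimal dual solution of Problem \ref{prob:dual}, exploiting the fact that the payment drops out of the matching payoff. The first observation I would make explicit is that substituting \eqref{eqn:traveler_utility} and \eqref{eqn:provider_utility} into $a_{ij}$ yields
\[
a_{ij}(x_{ij}) = u_i + u_j = \left(v_i(x_{ij},\theta_i) - t_{ij}\right) + \left(t_{ij} - c_j(x_{ij},\delta_j)\right) = v_i(x_{ij},\theta_i) - c_j(x_{ij},\delta_j),
\]
so the payment $t_{ij}$ cancels. Hence the primal objective depends only on valuations and costs, the payments never affect which stable assignment is selected, and the dual variables merely record how the realized welfare is split (Remark \ref{rmk:interpretation_dual_variables}). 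I would then invoke the preceding existence theorem and Lemma \ref{lem:convexity_dual} to obtain an optimal primal solution $x^*=(x_{ij}^*)$ together with an optimal dual solution $(\phi,\psi)$; because Problems \ref{prob:linear} and \ref{prob:dual} form an LP primal--dual pair, strong duality \eqref{eqn:strong_duality} and the complementary slackness conditions \eqref{eqn:complementary_slackness_1}--\eqref{eqn:complementary_slackness_3} hold at this pair.

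The central step is the construction of the payment profile. For every matched pair, i.e. each $(i,j)$ with $x_{ij}^*=1$, I would define
\[
t_{ij}^* := v_i(x_{ij}^*,\theta_i) - \phi_i,
\]
and set $t_{ij}^*=0$ for every unmatched traveler, as required by Remark \ref{rmk:no_assignment_no_payoff}. With this choice the traveler's utility on a matched pair is immediately $u_i = v_i - t_{ij}^* = \phi_i$. For the provider side I would use \eqref{eqn:complementary_slackness_1}, which on a matched pair reads $\phi_i + \psi_j = a_{ij} = v_i - c_j$; rearranging gives $\psi_j = (v_i - c_j) - \phi_i = (v_i - \phi_i) - c_j = t_{ij}^* - c_j = u_j$. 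Thus $\phi_i = u_i$ and $\psi_j = u_j$ hold simultaneously, and as a by-product $t_{ij}^* = c_j + \psi_j$, so the payment admits the symmetric dual reading $t_{ij}^* = v_i - \phi_i = c_j + \psi_j$.

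The subtlety I expect to be the main obstacle is well-definedness across a provider's multiple matches, since $\psi_j$ is a single per-unit quantity while provider $j$ may serve several travelers. I would resolve this by noting that \eqref{eqn:complementary_slackness_1} forces $t_{ij}^* - c_j = \psi_j$ for \emph{every} traveler $i$ matched to $j$, independently of $i$, so the provider-side surplus is uniform and the construction is coherent; the boundary cases are handled the same way, with \eqref{eqn:complementary_slackness_2} forcing $\phi_i=0=u_i$ for an unmatched traveler and \eqref{eqn:complementary_slackness_3} forcing $\psi_j=0$ for a provider serving below capacity. Finally, optimality of the profile follows from strong duality: summing the identities gives $\sum_{i\in\mathcal{I}} u_i + \sum_{j\in\mathcal{J}} \varepsilon_j u_j = \sum_{i\in\mathcal{I}} \phi_i + \sum_{j\in\mathcal{J}} \varepsilon_j \psi_j = \sum_{i\in\mathcal{I}}\sum_{j\in\mathcal{J}} a_{ij}(x_{ij}^*)$, so $(x^*,t^*)$ attains the maximal total welfare and is an ideal-mobility equilibrium in the sense of Definition \ref{defn:ideal_equilibrium}, which completes the argument.
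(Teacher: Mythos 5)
Your proof is correct and rests on the same engine as the paper's --- strong LP duality between Problems \ref{prob:linear} and \ref{prob:dual} together with complementary slackness --- but it executes that idea constructively where the paper's proof is purely declarative. The paper's argument asserts that the stability conditions of Definition \ref{defn:stability} coincide with \eqref{eqn:complementary_slackness_1}--\eqref{eqn:complementary_slackness_3}, concludes that optimal dual vectors $(\phi^*, \psi^*)$ exist, and then simply states that the payments ``are essentially the same with'' the dual solutions; no payment profile is ever exhibited. You instead first note the cancellation $a_{ij} = v_i - c_j$ (so payments never affect which assignment is optimal), then define the payments explicitly, $t_{ij}^* := v_i(x_{ij}^*, \theta_i) - \phi_i$ on matched pairs and $t_{ij}^* = 0$ otherwise, and verify $\phi_i = u_i$ and $\psi_j = u_j$ from complementary slackness on matched pairs. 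This buys three things the paper leaves implicit: (i) an actual formula, with the symmetric reading $t_{ij}^* = v_i - \phi_i = c_j + \psi_j$ showing exactly how the dual variables split the matching surplus; (ii) a resolution of the well-definedness question for a provider serving several travelers, since $t_{ij}^* - c_j = \psi_j$ holds independently of which matched traveler $i$ is considered; and (iii) correct handling of the boundary cases (unmatched travelers get $\phi_i = u_i = 0$ by \eqref{eqn:complementary_slackness_2}, under-capacity providers get $\psi_j = 0$ by \eqref{eqn:complementary_slackness_3}), which is also what legitimizes writing the aggregate provider welfare as $\sum_{j \in \mathcal{J}} \varepsilon_j \psi_j$ in your final strong-duality identity. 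One further point in your favor: condition \eqref{eqn:complementary_slackness_1} as printed in the paper demands $\phi_i + \psi_j = a_{ij}$ for \emph{all} pairs $(i,j)$, which is stronger than complementary slackness actually provides (dual feasibility only gives $\phi_i + \psi_j \geq a_{ij}$, with equality forced where $x_{ij}^* > 0$); your proof invokes the equality only on matched pairs, which is the correct form and is all the construction needs.
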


\begin{proof}
	Suppose $\mathbf{x} ^ * = (x_{i j} ^ *)_{i \in \mathcal{I}, j \in \mathcal{J}}$ is a stable assignment for mobility game $\mathcal{M}$. Under $\mathbf{x} ^ *$, we can calculate $\sum_{i \in \mathcal{I}} \sum_{j \in \mathcal{J}} a_{i j}(x_{i j})$, which by the theorem of strong duality and the definition of stability, \eqref{eqn:strong_duality} holds true. This is because \eqref{eqn:complementary_slackness_1} - \eqref{eqn:complementary_slackness_3} are equivalent to the conditions that ensure stability. Thus, there exist vectors $(\phi ^ *, \psi ^ *)$ from Problem \ref{prob:dual} that feasible and optimal. By Definition \ref{defn:ideal_equilibrium}, it follows that the mobility payments associated with the optimal assignments of travelers and providers are essentially the same with the optimal solutions of Problems \ref{prob:linear} and \ref{prob:dual}. Therefore, by the established existence of solutions to Problems \ref{prob:linear} and \ref{prob:dual} as long as there exists a stable assignment $(x_{i j} ^ *)_{i \in \mathcal{I}, j \in \mathcal{J}}$ an optimal profile of mobility payment $(t_{i j} ^ *)_{i \in \mathcal{I}, j \in \mathcal{J}}$ must exist. By Remark \ref{rmk:interpretation_dual_variables} and Definition \ref{defn:stability} at an optimal assignment we have $\phi_i = u_i$ and $\psi_j = u_j$ for all $i \in \mathcal{I}$ and all $j \in \mathcal{J}$.
\end{proof}

\subsection{Asymmetric Information in the Mobility Game}
\label{subsection:mechanism}

So far, we have implicitly assumed that both the travelers and providers have complete information of the entire information structure of the mobility game $\mathcal{M}$. In other words, each traveler knows every other travelers' and providers' information, i.e., travelers know each others' utilities and valuations, providers know each other providers' types and cost functions. In a realistic setting, this implicit assumption is unreasonably restrictive. Thus, for the rest of the paper, we focus on a ``mechanism" that \emph{induces the mobility game} $\mathcal{M}$ by eliciting the private information of all the travelers and providers. First, we relax Assumption \ref{ass:complete_information} and consider that the types of travelers, i.e., $\theta = (\theta_i)$ and of providers, i.e., $\delta = (\delta_j)$ are private information, i.e., known only to themselves. Next, we denote by $\mathbf{X}_{- i}$ the assignment of travelers in $\mathcal{I} \setminus \{i\}$ to providers in $\mathcal{J}$. Similarly, we denote by $\mathbf{X}_{- j}$ the assignment of travelers in $\mathcal{I}$ to providers in $\mathcal{J} \setminus \{j\}$. Furthermore, we assume that travelers are charged by the mechanism, say $t_i \in \mathbb{R}$, and providers are compensated by the mechanism, say $t_j \in \mathbb{R}$. The proposed mechanism ensures to collect all funds from the travelers and compensate accordingly the providers.

\begin{algorithm}[ht]\label{alg:pricing_mechanism}
	\SetAlgoLined
	\KwData{$\mathcal{I}, \mathcal{J}, (\theta_i)_{i \in \mathcal{I}}, (\delta_j)_{j \in \mathcal{J}}$}
	\KwResult{$\mathbf{x} ^ *$ and $(t_{i j})_{i \in \mathcal{I}, j \in \mathcal{J}}$}
	Define the valuation functions of every traveler and provider and use them to construct matrix $\mathbf{A}$. Solve for the optimal solution $\mathbf{x} ^ *$ of Problem \ref{pro:primal}\;
	\For{$i \in \mathcal{I}$}{
		Solve for the optimal solution $\mathbf{X}_{- i} ^ *$ of Problem \ref{pro:primal}\;
		Set the mobility payment for each traveler $i$:
			\begin{multline*}
				t_i = \sum_{\ell \in \mathcal{I} \setminus \{i\}} \sum_{j \in \mathcal{J}} u_{\ell}(x_{i j}, \theta_{- i}) \\
				- \sum_{\ell \in \mathcal{I} \setminus \{i\}} \sum_{j \in \mathcal{J}} u_{\ell}(x_{i j}, \theta_{\ell})
			\end{multline*}
	}
	\For{$j \in \mathcal{J}$}{
		Solve for the optimal solution $\mathbf{x}_{- j} ^ *$ of Problem \ref{pro:primal}\;
		Set the mobility payment for each provider $j$:
		    \begin{multline*}
		        t_j = \sum_{i \in \mathcal{I}} \sum_{\kappa \in \mathcal{J} \setminus \{j\}} u_{\kappa}(x_{i j}, \delta_{- j}) \\
		        - \sum_{i \in \mathcal{I}} \sum_{\kappa \in \mathcal{J} \setminus \{j\}} u_{\kappa}(x_{i j}, \delta_{\kappa})
		    \end{multline*}
	}
	\caption{Pricing Mechanism}
\end{algorithm}

\begin{theorem}[Voluntary Participation]
	No traveler $i \in \mathcal{I}$ and no provider $j \in \mathcal{J}$ can gain for better individual utility by matching externally compared to the utility gained by participating in the induced mobility game $\mathcal{M}$.
\end{theorem}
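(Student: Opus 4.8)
The plan is to recognize the charges and compensations computed in Algorithm~\ref{alg:pricing_mechanism} as Vickrey--Clarke--Groves (Clarke-pivot) payments and to reproduce the classical argument that such payments make a mechanism individually rational. By Remark~\ref{rmk:no_assignment_no_payoff}, the outside option of any agent---remaining unmatched, i.e.\ ``matching externally''---yields zero utility, so it suffices to show that the net utility each agent obtains under the induced game $\mathcal{M}$ is non-negative. The traveler and provider cases are symmetric, so I would carry out the traveler computation in full and then indicate the parallel provider argument.

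First I would fix a traveler $i$ and write their realized net utility as $u_i = v_i(x_{ij}^*, \theta_i) - t_i$, where $\mathbf{x}^*$ is the welfare-maximizing assignment of Problem~\ref{pro:primal} and $t_i$ is the payment prescribed by the mechanism. Substituting the payment formula, $t_i$ is the optimal aggregate welfare of the remaining travelers in the reduced game (evaluated at $\mathbf{X}_{-i}^*$) minus their welfare under the full optimum $\mathbf{x}^*$. Adding traveler $i$'s own valuation and regrouping, the terms evaluated at $\mathbf{x}^*$ collapse into the full optimal social welfare, which I denote $V(\mathcal{I},\mathcal{J})$, while the reduced-game term becomes $V(\mathcal{I}\setminus\{i\},\mathcal{J})$. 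This gives the identity $u_i = V(\mathcal{I},\mathcal{J}) - V(\mathcal{I}\setminus\{i\},\mathcal{J})$; that is, each agent's net utility under the mechanism equals their marginal contribution to the social welfare.

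The decisive step is the monotonicity $V(\mathcal{I},\mathcal{J}) \geq V(\mathcal{I}\setminus\{i\},\mathcal{J})$. I would establish it by a feasibility extension: any feasible assignment for the reduced game on $\mathcal{I}\setminus\{i\}$ is also feasible for the full game by setting $x_{ij}=0$ for every $j$ (leaving $i$ unmatched), and by Remark~\ref{rmk:no_assignment_no_payoff} this extension carries identical aggregate payoff. Hence the full optimum is at least the reduced optimum, so $u_i \geq 0$, which is precisely the voluntary-participation inequality for travelers. For a provider $j$, the same computation with the provider payment formula yields $u_j = V(\mathcal{I},\mathcal{J}) - V(\mathcal{I},\mathcal{J}\setminus\{j\})$; removing $j$ only deletes the capacity $\varepsilon_j$, and since every assignment feasible without $j$ remains feasible with $j$ (one need only leave $j$'s capacity idle), the identical argument delivers $u_j \geq 0$.

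I expect the main obstacle to lie in the regrouping of the second step rather than in the monotonicity inequality itself. The payment expressions are written through the per-agent utilities $u_\ell$, which themselves contain the transfer terms $t_{\ell j}$; to identify the two sums with $V(\mathcal{I},\mathcal{J})$ and $V(\mathcal{I}\setminus\{i\},\mathcal{J})$ I must first check that, when summed over a complete assignment, these internal transfers cancel in pairs, so that the welfare is expressed purely through the surpluses $a_{ij}=v_i-c_j$. Once that cancellation is made explicit---and the non-negativity $v_i\geq 0$ from the codomain of $v_i$ is recorded for use in the extension argument---the reduction to the marginal-contribution identity and the conclusions $u_i,u_j\geq 0$ follow directly.
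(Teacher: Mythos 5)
Your proposal is correct (under the natural Clarke/VCG reading of Algorithm \ref{alg:pricing_mechanism}) but it argues the result by a genuinely different route than the paper. Both proofs start the same way: since the outside option is worth zero (Remark \ref{rmk:no_assignment_no_payoff}), it suffices to show $u_i, u_j \geq 0$ for every participant. From there you run the classical VCG individual-rationality argument: substitute the pivot payment, regroup to obtain the marginal-contribution identity $u_i = V(\mathcal{I},\mathcal{J}) - V(\mathcal{I}\setminus\{i\},\mathcal{J})$, and prove monotonicity of $V$ by extending any reduced-game assignment with agent $i$ left unmatched (at zero surplus, again by Remark \ref{rmk:no_assignment_no_payoff}). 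The paper instead leans on its LP-duality machinery: it invokes the identification $u_i = \phi_i$, $u_j = \psi_j$ from Theorem \ref{thm:existence_payment} and concludes non-negativity directly from the dual feasibility constraints \eqref{Constraint:Problem3-Second}--\eqref{Constraint:Problem3-Third}; its opening observation that removing an agent cannot increase welfare is used only to claim the payments are non-negative, which, for travelers, is not the operative fact (a non-negative charge lowers $u_i$; what matters is $t_i \leq v_i$, which your identity delivers and the paper's dual step replaces). What each approach buys: yours is self-contained, does not depend on Theorem \ref{thm:existence_payment} (whose own proof is informal), and gives the sharper quantitative conclusion that each equilibrium utility equals the agent's marginal contribution to social welfare; the paper's is shorter given the duality results already established and reinforces the interpretation of $(\phi,\psi)$ in Remark \ref{rmk:interpretation_dual_variables}. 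Finally, the obstacle you flag is real and to your credit: the payments in Algorithm \ref{alg:pricing_mechanism} are written through utilities $u_\ell$ that themselves contain transfers, and the marginal-contribution identity requires reading the pivot as taken over the surplus welfare $a_{ij} = v_i - c_j$ of \emph{all} other agents (travelers and providers together, so the transfers cancel in pairs), not just the other travelers as the formula literally states; the paper's proof silently glosses over this ambiguity.
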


\begin{proof}
	It is sufficient to show that no agent in $\mathcal{I} \cup \mathcal{J}$ can gain negative utility by participating in the induced game $\mathcal{M}$, i.e., we must have $u_i, u_j \geq 0$ for all $i, j \in \mathcal{I} \cup \mathcal{J}$. First, note that the maximization of $\sum_{i \in \mathcal{I}} \sum_{j \in \mathcal{J}} a_{i j}(x_{i j})$ is the highest possible value we can achieve. Removing even one agent, does not increase this value under any scenario. Thus, by definition, both payments $t_i$ and $t_j$ are non-negative. At equilibrium, the utilities of any traveler and provider are equivalent to the solutions $(\phi, \psi)$ of Problem \ref{prob:dual} (as we showed in Theorem \ref{thm:existence_payment}). Since $(\phi, \psi)$ ensures non-negativity it follows that $u_i, u_j \geq 0$ for all $i, j \in \mathcal{I} \cup \mathcal{J}$.
\end{proof}

\begin{theorem}[Truthfulness]
	Misreporting does not benefit any traveler or provider.
\end{theorem}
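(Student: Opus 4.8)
The plan is to recognize the pricing mechanism of Algorithm \ref{alg:pricing_mechanism} as a Vickrey--Clarke--Groves (VCG) mechanism: each agent's payment is the Clarke ``pivot'' term equal to the externality that the agent imposes on the rest of the system, i.e., the difference between the others' aggregate utility in the agent's absence and in the agent's presence. I would prove that truthful reporting of the private type is a weakly dominant strategy for every agent, from which the claim follows immediately. I would carry out the argument in full for an arbitrary traveler $i \in \mathcal{I}$ (treating $\theta_{-i}$ and all providers' reports as fixed) and then invoke symmetry to cover the provider payment $t_j$.

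First I would fix traveler $i$ and let $\hat{\theta}_i$ denote a possibly untruthful report, so that the mechanism selects the optimal assignment $\mathbf{x}^* = \mathbf{x}^*(\hat{\theta}_i, \theta_{-i})$ of Problem \ref{pro:primal} and charges $t_i$ as in Algorithm \ref{alg:pricing_mechanism}. The realized utility of $i$ is evaluated at the \emph{true} type, so $u_i = v_i(\mathbf{x}^*, \theta_i) - t_i$. Substituting the payment $t_i$ and noting that its first sum is computed on $\mathbf{X}_{-i}^*$ (the optimal assignment with $i$ removed) and is therefore independent of $\hat{\theta}_i$, I would regroup the terms as
\begin{multline*}
	u_i = \Big[ v_i(\mathbf{x}^*, \theta_i) + \sum_{\ell \in \mathcal{I} \setminus \{i\}} \sum_{j \in \mathcal{J}} u_\ell(\mathbf{x}^*, \theta_\ell) \Big] \\
	- \sum_{\ell \in \mathcal{I} \setminus \{i\}} \sum_{j \in \mathcal{J}} u_\ell(\mathbf{X}_{-i}^*, \theta_\ell).
\end{multline*}
The bracketed quantity is exactly the social-welfare objective \eqref{Equation:Problem1-ObjectiveFunction} evaluated at $\mathbf{x}^*$ using $i$'s \emph{true} type, while the trailing term depends only on $\theta_{-i}$ and not on $i$'s report.

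The core step is then the standard VCG alignment argument. Since $\mathbf{x}^*$ maximizes the \emph{reported} social welfare over all feasible assignments, a truthful report $\hat{\theta}_i = \theta_i$ makes the bracketed term attain the maximum of the \emph{true} social welfare; any misreport can only steer the mechanism toward an assignment that remains feasible but no longer maximizes the true welfare, so the bracketed term, and hence $u_i$ (the subtracted term being fixed), can only weakly decrease. This shows $u_i$ is maximized by truth-telling irrespective of the other agents' reports, i.e., truthfulness is weakly dominant. Replacing $\theta_i, \theta_{-i}, v_i$, and $\mathbf{X}_{-i}^*$ by $\delta_j, \delta_{-j}, -c_j$, and $\mathbf{x}_{-j}^*$ yields the identical computation and establishes the same conclusion for every provider $j \in \mathcal{J}$.

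The main obstacle I anticipate is making the alignment rigorous rather than merely asserting it: I must verify that the quantity an agent can influence through its report is \emph{precisely} the objective the mechanism optimizes, and that the pivot term is genuinely report-independent (this hinges on the first sum being evaluated on $\mathbf{X}_{-i}^*$, not on $\mathbf{x}^*$). Two features of this model warrant a short check. The capacity constraints \eqref{Constraint:Problem1-Second} are fixed data that do not vary with reports, so they carve out the same feasible set under truth and under deviation and thus do not disturb the argument. The two-sided structure means a traveler's report perturbs the providers' assignment as well, so I would confirm, using the identification $\phi_i = u_i$ and $\psi_j = u_j$ from Theorem \ref{thm:existence_payment}, that the welfare term each agent internalizes accounts for this cross-side interaction consistently.
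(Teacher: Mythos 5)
Your proposal is correct in its core logic and is, in substance, the standard VCG/Clarke-pivot argument; the paper's own proof contains the same alignment idea but packages it quite differently. The paper splits into two cases ($\hat{\theta}_i \leq \theta_i$ and $\hat{\theta}_i \geq \theta_i$), dismisses under-reporting by an ad hoc appeal to non-negativity of $t_i$, and only for over-reporting invokes the equivalence between maximizing $v_i - t_i$ and maximizing aggregate valuation, concluding via \eqref{eqn:last_argument_IC} (a comparison of $\arg\max$ quantities that is not well defined as written) that truth-telling minimizes the payment. Your single decomposition $u_i = \bigl[\text{true welfare at } \mathbf{x}^*\bigr] - \bigl[\text{report-independent pivot term}\bigr]$ subsumes both cases at once, makes the weak-dominance of truth-telling explicit, and avoids the paper's questionable inequality entirely; it is the cleaner and more rigorous route, and it correctly pins the argument on the fact that the first sum in $t_i$ is evaluated at $\mathbf{X}_{-i}^*$ and hence cannot be influenced by $i$'s report.

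One caveat, which you flag in your final paragraph but should resolve rather than defer: the bracketed term in your decomposition is \emph{not} exactly the objective \eqref{Equation:Problem1-ObjectiveFunction}. That objective sums $a_{ij} = u_i + u_j = v_i - c_j$ (the transfers $t_{ij}$ cancel between the two sides), whereas the payment in Algorithm \ref{alg:pricing_mechanism} internalizes only the other \emph{travelers'} utilities $u_\ell = v_\ell - t_{\ell j}$, omitting the providers' terms. Consequently a traveler's misreport can shift provider-side welfare (the $c_j$ and transfer terms) in ways the traveler does not pay for, and the exact VCG alignment fails for the mechanism as literally written. This gap is equally present in the paper's own proof (which silently replaces the objective $\sum_{i,j}(v_i - c_j)$ by $\sum_{i,j} v_i$), so your proposal is no weaker than the published argument --- but a fully rigorous version would either redefine the pivot term over all agents in $(\mathcal{I} \cup \mathcal{J}) \setminus \{i\}$, or prove that the omitted provider-side terms are independent of $i$'s report; invoking $\phi_i = u_i$, $\psi_j = u_j$ from Theorem \ref{thm:existence_payment} alone will not discharge this, since that identification concerns the equilibrium dual variables, not the off-equilibrium assignments a misreport can induce.
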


\begin{proof}
	Let us assume that all agents in $\mathcal{I} \cup \mathcal{J}$ voluntary participate in the mechanism. We consider two cases when traveler $i$ misreports their true type, i.e., $\hat{\theta}_i \geq \theta_i$ and $\hat{\theta}_i \leq \theta_i$, where $\hat{\theta}_i$ is traveler $i$'s report. If traveler $i$ reports $\hat{\theta}_i$ that is lower than their true type, then traveler $i$ cannot improve their utility as $t_i$ is necessarily non-negative and a lower misreporting $\hat{\theta}_i \leq \theta_i$ can only lead to lower utilities. Suppose now that traveler $i$ reports $\hat{\theta}_i$ that is higher than their true type. The exact value of $\hat{\theta}_i$ can be chosen by the following maximization problem
	    \begin{equation}
	        \max_{\hat{\theta}_i} u_i(x_{i j}, \hat{\theta}_i) = \max_{\hat{\theta}_i} v_i(x_{i j}, \hat{\theta}_i) - t_i,
	    \end{equation}
	where $t_i$ is given by Algorithm \ref{alg:pricing_mechanism}. Note though that the maximization of $v_i(x_{i j}, \hat{\theta}_i) - \sum_{\ell \in \mathcal{I} \setminus \{i\}} \sum_{j \in \mathcal{J}} u_{\ell}(x_{i j}, \hat{\theta}_{- i}) - \sum_{\ell \in \mathcal{I} \setminus \{i\}} \sum_{j \in \mathcal{J}} u_{\ell}(x_{i j}, \hat{\theta}_{\ell})$ is equivalent to the maximization of $\sum_{i \in \mathcal{I}} \sum_{j \in \mathcal{J}} v_i(x_{i j}, \hat{\theta}_i)$ with respect to the assignment $x_{i j}$. After all that is the goal of traveler $i$, namely by misreporting their type to lead the mechanism to a better assignment. Let $\hat{x}_{i j} ^ *$ and $x_{i j} ^ *$ denote the optimal assignment with $\hat{\theta}_i$ and $\theta_i$, respectively. Hence, we have
	    \begin{equation}\label{eqn:last_argument_IC}
	        \arg \max_{\hat{x}_{i j} ^ *} \sum_{i \in \mathcal{I}} \sum_{j \in \mathcal{J}} v_i(\hat{x}_{i j} ^ *, \hat{\theta}_i) \leq \arg \max_{x_{i j} ^ *} \sum_{i \in \mathcal{I}} \sum_{j \in \mathcal{J}} v_i(x_{i j} ^ *, \theta_i)
	    \end{equation}
	for each traveler $i$. It follows immediately from \eqref{eqn:last_argument_IC} that traveler $i$ can maximize their utility by minimizing the mobility payments as defined in Algorithm \ref{alg:pricing_mechanism}. This is only possible when $\hat{\theta}_i = \theta_i$. Thus, traveler $i$ cannot improve their utility by misreporting. Therefore, we conclude that, with the proposed pricing mechanism (Algorithm \ref{alg:pricing_mechanism}), under no circumstance can traveler $i$ improve their utility by misreporting about their type $\theta_i$. In other words, any traveler $i$ has a strategy to always truthfully report their type to the mechanism. We can follow the same arguments to show this for the providers. Therefore, the proof is completed.
\end{proof}

\begin{proposition}
	If traveler $i$ is matched to provider $j$ while having misreported their type to the mechanism, then traveler $i$ does not gain a better utility compared to the utility gained under the true type.
\end{proposition}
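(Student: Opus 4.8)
The plan is to treat this proposition as a specialization of the preceding Truthfulness theorem to the event that the misreporting traveler remains matched, letting the VCG-style externality payment of Algorithm \ref{alg:pricing_mechanism} carry the argument. First I would fix traveler $i$, write $\theta_i$ for the true type and $\hat{\theta}_i \neq \theta_i$ for the report, and let $x_{ij}^*$ and $\hat{x}_{ij}^*$ denote the optimal assignments returned by Problem \ref{pro:primal} under $\theta_i$ and $\hat{\theta}_i$, respectively. The hypothesis supplies some $j \in \mathcal{J}$ with $\hat{x}_{ij}^* = 1$, so the object to bound is the \emph{true} utility $u_i(\hat{x}_{ij}^*, \theta_i) = v_i(\hat{x}_{ij}^*, \theta_i) - t_i$, with $t_i$ the payment assigned to the matched traveler by the mechanism.

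The central step is to substitute the externality payment $t_i$ into this expression and observe, exactly as in the proof of the Truthfulness theorem, that $u_i(\hat{x}_{ij}^*, \theta_i)$ reduces to the true aggregate valuation $\sum_i \sum_j v_i(\hat{x}_{ij}^*, \theta_i)$ evaluated at the misreported assignment, minus a term $\sum_{\ell \neq i} \sum_j u_\ell(\cdot, \theta_\ell)$ that is independent of $i$'s own report. Since $x_{ij}^*$ maximizes the true aggregate valuation over all feasible assignments by construction, the assignment $\hat{x}_{ij}^*$ can only attain a smaller or equal value of $\sum_i \sum_j v_i(\cdot, \theta_i)$. Cancelling the common report-independent term then yields $u_i(\hat{x}_{ij}^*, \theta_i) \leq u_i(x_{ij}^*, \theta_i)$, which is precisely the claim.

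I expect the main difficulty to be bookkeeping rather than conceptual: carefully tracking which argument each valuation and utility is evaluated at (misreported assignment versus true type) so that the report-independent term cancels cleanly, mirroring inequality \eqref{eqn:last_argument_IC}. One point worth verifying explicitly is that the matched-case hypothesis $\hat{x}_{ij}^* = 1$ is consistent with Remark \ref{rmk:no_assignment_no_payoff}, so that $t_i$ is genuinely the matched-traveler payment rather than being forced to zero; once that is settled, the inequality is immediate and no separate high-report/low-report case split beyond what the Truthfulness theorem already provides is required.
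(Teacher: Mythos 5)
Your proposal is correct and rests on the same machinery as the paper's own proof: substitute the externality payment from Algorithm \ref{alg:pricing_mechanism} into the misreporting traveler's true utility, decompose that utility as the true aggregate welfare at the assignment the mechanism selects minus a pivot term that does not depend on $i$'s report, and invoke optimality of the truthful assignment for the true types. The difference worth recording is what each argument actually delivers. The paper's proof stops at non-negativity: after substituting the payment it concludes $u_i(x_{ij},\theta_i) = \sum_{i\in\mathcal{I}} v_i(x_{ij}^*) - \sum_{\ell\in\mathcal{I}\setminus\{i\}} v_\ell(x_{ij},\theta_\ell) \geq 0$ and then asserts that no traveler benefits from misreporting; but non-negativity of the misreporter's utility does not by itself compare it against the utility earned under truthful reporting, so the paper's argument is, strictly speaking, incomplete for the claim as stated. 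Your version carries the decomposition through to the comparison: the pivot term is identical in the truthful and misreported runs of the mechanism, so cancelling it reduces the proposition to the fact that the true welfare at $\hat{x}_{ij}^*$ is at most the true welfare at $x_{ij}^*$ (since $x_{ij}^*$ maximizes true welfare), giving $u_i(\hat{x}_{ij}^*,\theta_i) \leq u_i(x_{ij}^*,\theta_i)$, which is the actual claim. One bookkeeping correction: the report-independent term is the algorithm's \emph{first} sum, the one written with $\theta_{-i}$ and evaluated at $\mathbf{X}_{-i}^*$; the sum $\sum_{\ell\neq i}\sum_{j} u_\ell(\cdot,\theta_\ell)$ that you name is evaluated at the assignment chosen from the reports, hence it does depend on $i$'s report and must be absorbed into the welfare term rather than cancelled. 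With that relabeling, your argument is exactly the standard VCG proof specialized to the matched case, and it proves the proposition more completely than the paper's own proof does.
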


\begin{proof}
	We show this only for the travelers as the arguments are similar for the providers. By construction of the mobility payments in Algorithm \ref{alg:pricing_mechanism} non-negativity of the payments for each traveler is guaranteed, i.e., $t_i \geq 0$. By definition, the valuation of each traveler is non-negative under any assignment. Thus, the utility defined in \eqref{eqn:traveler_utility} is also non-negative. At equilibrium, we have
	    \begin{equation}
	        \max_{x_{i j}} \sum_{i \in \mathcal{I}} v_i(x_{i j}, \theta_i) \leq \sum_{i \in \mathcal{I}} v_i(x_{i j} ^ *).
	    \end{equation}
    Thus, it follows that
        \begin{align}
            u_i(x_{i j}, \theta_i) & = v_i(x_{i j}, \theta_i) - t_i \notag \\
            & = \sum_{i \in \mathcal{I}} v_i(x_{i j} ^ *) - \sum_{\ell \in \mathcal{I} \setminus \{i\}} v_{\ell}(x_{i j}, \theta_{\ell}) \geq 0.
        \end{align}
    Therefore, no traveler can hope for better utility by misreporting.
\end{proof}

\begin{proposition}[Social Efficiency]
	The proposed mechanism satisfies social efficiency as it ensures the maximization of the aggregate social welfare of both travelers and providers.
\end{proposition}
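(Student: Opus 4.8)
The plan is to show that the assignment $\mathbf{x}^*$ produced in the first line of Algorithm \ref{alg:pricing_mechanism} is precisely the one maximizing the aggregate social welfare, and that the subsequent payment rules do not disturb this choice. The first fact I would record is that, by \eqref{eqn:traveler_utility} and \eqref{eqn:provider_utility}, the matching payoff satisfies $a_{ij}(x_{ij}) = u_i(x_{ij}, \theta_i) + u_j(x_{ij}, \delta_j) = v_i(x_{ij}, \theta_i) - c_j(x_{ij}, \delta_j)$, so the internal payment $t_{ij}$ cancels. Consequently the aggregate welfare $\sum_{i \in \mathcal{I}} \sum_{j \in \mathcal{J}} a_{ij}(x_{ij})$ coincides with the objective \eqref{Equation:Problem1-ObjectiveFunction} of Problem \ref{pro:primal} and depends on the agents only through their true valuations and costs, not through any transfer assessed by the mechanism.

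Next I would invoke the Truthfulness theorem, which guarantees that no traveler and no provider can improve their utility by misreporting, so that at the mechanism's equilibrium every agent submits its true type $\theta_i$ (resp. $\delta_j$). Hence the utility matrix $\mathbf{A}$ assembled in the first step of Algorithm \ref{alg:pricing_mechanism} is built from the true types, and the optimization solved there is exactly Problem \ref{pro:primal} with the correct data. By the definition of an optimal (stable) assignment, namely inequality \eqref{eqn:stability_condition}, the computed $\mathbf{x}^*$ satisfies $\sum_{i,j} a_{ij}(x_{ij}^*) \geq \sum_{i,j} a_{ij}(x_{ij}')$ for every feasible $x_{ij}'$; that is, $\mathbf{x}^*$ is welfare-maximizing among all feasible assignments.

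It then remains to argue that the externality-type payments computed in the two loops of Algorithm \ref{alg:pricing_mechanism} leave this assignment unchanged. Each payment is a difference between the others' welfare absent a given agent and their welfare in the full problem, so it enters only the division of the generated surplus and never the objective that selects $\mathbf{x}^*$. Combining the cancellation of transfers, truthful reporting, and the optimality of $\mathbf{x}^*$ in Problem \ref{pro:primal} yields that the mechanism implements the maximizer of $\sum_{i \in \mathcal{I}} \sum_{j \in \mathcal{J}} a_{ij}(x_{ij})$, which is exactly social efficiency.

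The main obstacle, as I see it, is conceptual rather than computational: making precise that the equilibrium of the induced game feeds the \emph{true} types into the first step, so that the welfare being maximized is the genuine social welfare rather than a misreported surrogate. This is essentially the content of the earlier Truthfulness theorem, so the proof largely reduces to chaining that result with the definition of the optimal assignment and the transfer-cancellation identity $a_{ij} = v_i - c_j$; the only genuinely fresh step is verifying that identity and noting that the payment rules do not appear in the selection objective.
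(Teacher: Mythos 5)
Your proof is correct and takes essentially the same route as the paper, whose entire argument is the one-line assertion that social efficiency ``follows immediately by construction of the mobility payments in Algorithm \ref{alg:pricing_mechanism}.'' You have simply supplied the details that assertion leaves implicit --- the transfer-cancellation identity $a_{ij} = v_i(x_{ij},\theta_i) - c_j(x_{ij},\delta_j)$, truthful reporting via the Truthfulness theorem, and the observation that the externality-style payments only redistribute surplus without altering the assignment selected in the first step --- so your write-up is, if anything, more complete than the published proof.
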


\begin{proof}
	By construction of the mobility payments in Algorithm \ref{alg:pricing_mechanism}, it follows immediately that the optimal solution maximizes the social welfare.
\end{proof}

\section{SIMULATION RESULTS}
\label{Section:SimulationResults}

In this section, we present a numerical example, its solution and discuss its physical interpretation. Consider a mobility system of four providers $\mathcal{J} = \{\text{bike}, \text{car}, \text{bus}, \text{train}\}$ and twenty travelers $\mathcal{I} = \{1, 2, \dots, 20\}$. Each provider has traveler capacities, namely we have $\varepsilon_{bike} = 1$, $\varepsilon_{car} = 4$, $\varepsilon_{bus} = 5$, and $\varepsilon_{train} = 10$. Moreover, we partition the set of travelers $\mathcal{I}$ into four types, i.e., students, commuters, tourists, consumers with sizes $|\mathcal{I}_{\text{students}}| = 3$, $|\mathcal{I}_{\text{commuters}}| = 5$, $|\mathcal{I}_{\text{tourists}}| = 4$, $|\mathcal{I}_{\text{consumers}}| = 8$. Each type of travelers can represent the personal predisposition $\theta = (\theta_i)_{i \in \mathcal{I}}$. Next, with a slight abuse of notation, we generate a random utility assignment matrix
	\begin{equation}
		\mathbf{A} = \\
			\begin{bmatrix}
				1 & 2 & 0.5 \\
				2.5 & 2 & 1.5 \\
				2.5 & 4 & 1.5 \\
				2.5 & 5 & 6.5
			\end{bmatrix},
	\end{equation}
where each row represents a type of travelers and each column represents a provider. The entry $a_{i j}$ of $\mathbf{A}$ represents the overall utility of assignment $x_{i j}$.

We solve Problem \ref{prob:linear} and compute with an optimal solution that maximizes the aggregate utilities of each traveler and each provider according to the travelers' preferences and maximizing the capacities of each provider. The computational complexity of the proposed method is relatively low as long as the number of travelers and providers remain small. This is reasonable to expect as at any given moment there can only less than five different travel options (so, the number of providers is always small). By ensuring we partition the travelers' requests according to origin, destination, and type, we can make certain that the number of travelers does not make the optimization problems untrackable.

\begin{figure}[ht]
	\centering
	\includegraphics[width = 1 \columnwidth]{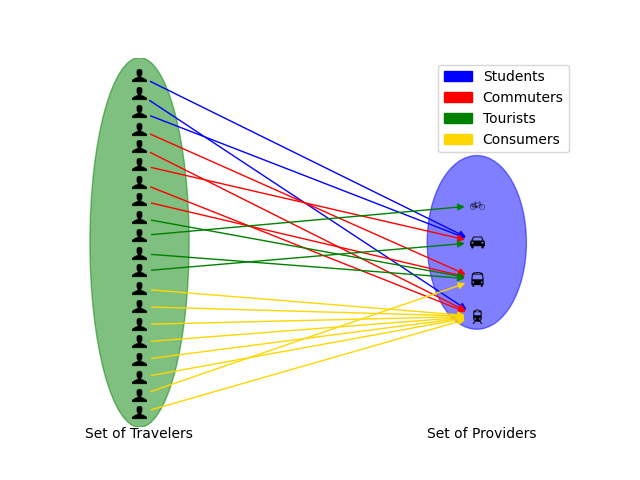}
	\caption{Optimal assignment of travelers to providers with 20 travelers and 4 providers.}
	\label{fig:OptimalAssignment}
\end{figure}

We can see from Fig. \ref{fig:OptimalAssignment} that an efficient allocation of the providers' resources and services to different types of travelers can be attained using a game-theoretic framework. The assignment shown in Fig. \ref{fig:OptimalAssignment} is stable, maximizes the social welfare, and maximizes the capacity of the providers (thus, maximizing their utilities too).

\section{CONCLUDING REMARKS}
\label{Section:Conclusion}

In this paper, we provided a theoretical study of a two-sided game for a mobility system of travelers and providers focusing on how to discretely assign travelers to providers when both have preferences. We formulated a binary program and its equivalent linear program and showed that at least one optimal solution exists and derived the conditions for such solution to be stable. We then allowed informational asymmetry in the proposed mobility game and provided a pricing mechanism to ensure we can elicit the private information of all travelers and providers. We showed that our mechanism guarantees economic efficiency in terms of maximizing the social welfare, and ensures voluntary participation, thus making sure that all agents have a unique dominant strategy.

Ongoing work includes relaxing our assumption of linearity in the utility functions and also investigating our model under the behavioral model of \emph{prospect theory} \cite{tversky1992}. An interesting research direction should involve an extension of our model with a sophisticated construction of the travelers' utilities and preferences using data gathered from a behavioral survey. This would helps us observe any correlations between the travelers' tendencies or attitudes and mobility preferences (which mode of transportation to use).

\addtolength{\textheight}{-12cm}   

\bibliographystyle{IEEEtran}
\bibliography{references}

\end{document}